\def\EO{\mathrm{EO}}
\newtheorem{theorem}{Theorem}
\newtheorem{lemma}{Lemma}
\newtheorem{definition}{Definition}
\def\err{{\tt err}}
\newcommand{\ignore}[1]{}
\renewcommand{\cal}[1]{\mathcal #1}
\newcommand{\R}{\mathbb R}
\newcommand{\Z}{{\mathbb Z}}
\newcommand{\eps}{\varepsilon}
\newcommand{\cB}{{\cal B}}
\renewcommand{\epsilon}{\varepsilon}
\newcommand{\Sec}[1]{\hyperref[sec:#1]{\S\ref*{sec:#1}}} 
\newcommand{\Eqn}[1]{\hyperref[eq:#1]{(\ref*{eq:#1})}} 
\newcommand{\Fig}[1]{\hyperref[fig:#1]{Fig.\,\ref*{fig:#1}}} 
\newcommand{\Tab}[1]{\hyperref[tab:#1]{Tab.\,\ref*{tab:#1}}} 
\newcommand{\Thm}[1]{\hyperref[thm:#1]{Theorem\,\ref*{thm:#1}}} 
\newcommand{\Lem}[1]{\hyperref[lem:#1]{Lemma\,\ref*{lem:#1}}} 
\newcommand{\Prop}[1]{\hyperref[prop:#1]{Prop.~\ref*{prop:#1}}} 
\newcommand{\Cor}[1]{\hyperref[cor:#1]{Corollary~\ref*{cor:#1}}} 
\newcommand{\Def}[1]{\hyperref[def:#1]{Definition~\ref*{def:#1}}} 
\newcommand{\Alg}[1]{\hyperref[alg:#1]{Alg.~\ref*{alg:#1}}} 
\newcommand{\Ex}[1]{\hyperref[ex:#1]{Ex.~\ref*{ex:#1}}} 
\newcommand{\Clm}[1]{\hyperref[clm:#1]{Claim~\ref*{clm:#1}}} 
\colorlet{shadecolor}{blue!20}
\def\conv{{\tt conv}}
\def\aff{{\tt aff}}
\def\1{{\mathbf 1}}
\newcommand{\B}{\mathcal{B}}
\title{Provable Submodular Minimization using  Wolfe's Algorithm}
\author{Deeparnab Chakrabarty\thanks{Microsoft Research, 9 Lavelle Road, Bangalore 560001.} \and Prateek Jain$^{\tiny *}$ \and Pravesh Kothari\thanks{University of Texas at Austin (Part of the work done while interning at Microsoft Research)}
}
\begin{document}

\maketitle

\begin{abstract}
Owing to several applications in large scale learning and vision problems, fast submodular function minimization (SFM) has become a critical problem. Theoretically, unconstrained SFM can be performed in polynomial time~\cite{IFF01,IO09}. However, these algorithms are typically not practical. In 1976, Wolfe~\cite{W71} proposed an algorithm to find the minimum Euclidean norm point in a polytope, and in 1980, Fujishige~\cite{Fujishige80} showed 
how Wolfe's algorithm  can be used for SFM. For general submodular functions, this Fujishige-Wolfe minimum norm algorithm seems to have the best empirical performance. 

Despite its good practical performance, very little is known about Wolfe's minimum norm algorithm theoretically. To our knowledge, the only result is an exponential time analysis due to Wolfe~\cite{W71} himself. 
In this paper we give a maiden convergence analysis of Wolfe's algorithm. We prove that in $t$ iterations, Wolfe's algorithm returns an $O(1/t)$-approximate solution to the min-norm point on {\em any} polytope.
We also prove a robust version of Fujishige's theorem which shows that an $O(1/n^2)$-approximate solution to the min-norm point on the base polytope implies {\em exact} submodular minimization. As a corollary, we get the first pseudo-polynomial time guarantee for the Fujishige-Wolfe minimum norm algorithm for unconstrained  submodular function minimization. 
\end{abstract}
\renewcommand{\P}{{\mathcal B}}
\section{Introduction}\label{sec:intro}

An integer-valued\footnote{One can assume any function is integer valued after suitable scaling.} function $f:2^X\to \Z$ defined over subsets of some finite ground set $X$ of $n$ elements is submodular if it satisfies the following  \emph{diminishing marginal returns} property: for every $S \subseteq T \subseteq X$ and $i \in X \setminus T$, $f(S\cup \{i\})-f(S) \geq f(T \cup \{i\})-f(T)$. Submodularity arises naturally in several applications such as image segmentation \cite{KohliT10}, sensor placement \cite{KSG08}, etc. where minimizing an arbitrary submodular function is an important primitive. 

In submodular function minimization (SFM),  we assume access to an \emph{evaluation oracle for $f$} which for any subset $S\subseteq X$ returns the value $f(S)$.
We denote the time taken by the oracle to answer a single query as $\EO$.
The objective is to find 
a set $T \subseteq X$ satisfying $f(T) \leq f(S)$ for every $S\subseteq X$. 
In 1981, Grotschel, Lovasz and Schrijver \cite{GLS81} demonstrated the first polynomial time algorithm for SFM using the ellipsoid algorithm. This algorithm, however, is practically infeasible due to the running time and the numerical issues in implementing the ellipsoid algorithm. 
In 2001, Schrijver~\cite{S00} and Iwata et al.~\cite{IFF00} independently designed {\em combinatorial} polynomial time algorithms for SFM. Currently, 
the best algorithm is by Iwata and Orlin~\cite{IO09} with a running time of $O(n^5\EO + n^6)$. 

However, from a practical stand point, none of the provably polynomial time algorithms exhibit good performance on instances of SFM encountered in practice (see \Sec{conc}). This, along with the widespread applicability of SFM in machine learning, has inspired a large body of work on \emph{practically} fast procedures (see~\cite{Bac10} for a survey). But most of these procedures  focus either on special submodular functions such as decomposable functions~\cite{JLB11, StobbeK10} or on constrained SFM problems \cite{IJB13,IJB13a, JBS13, IyerB13}. \smallskip

\noindent
\textbf{Fujishige-Wolfe's Algorithm for SFM:} 
For any submodular function $f$, the {\em base polytope} $\cB_f$ of $f$ is defined as follows:
\begin{equation}
  \label{eq:base}
  \cB_f=\{x\in \R^n: ~x(A)\leq f(A),\ \forall A\subset X,\ \ \text{and}\ \ x(X)=f(X)\},
\end{equation}
where $x(A):=\sum_{i\in A}x_i$ and $x_i$ is the $i$-th coordinate of $x\in \R^n$. 
Fujishige~\cite{Fujishige80} showed that if one can obtain the minimum norm point on the base polytope,
then one can solve SFM. Finding the minimum norm point, however, is a non-trivial problem; at present, to our knowledge, the only polynomial time algorithm
known is via the ellipsoid method.
Wolfe~\cite{W71} described an iterative procedure to find minimum norm points in polytopes as long as linear functions could be (efficiently) minimized over them.
Although the base polytope has exponentially many constraints, a 
simple greedy algorithm can minimize any linear function over it. 
Therefore using Wolfe's procedure on the base polytope coupled with Fujishige's theorem 
becomes a natural approach to SFM.  This was suggested as early as 1984 in Fujishige~\cite{Fujishige84} and is now called the Fujishige-Wolfe algorithm for SFM.

This approach towards SFM was revitalized in 2006 when Fujishige and Isotani~\cite{FHI06,FI11} announced encouraging computational results regarding the minimum norm point algorithm.
In particular, this algorithm significantly out-performed all known {\em provably} polynomial time algorithms.
Theoretically, however, little is known regarding the convergence of Wolfe's procedure except for the finite, but exponential, running time Wolfe 
himself proved. Nor is the situation any better for its application on the base polytope. Given the practical success, we believe this is an important, and intriguing, theoretical challenge.

In this work, we make some progress towards analyzing the Fujishige-Wolfe method for SFM and, in fact,  Wolfe's algorithm in general.
In particular, we prove the following two results: 
\begin{itemize}
\item We prove (in \Thm{wolfe}) that for {\em any} polytope $\P$, Wolfe's algorithm converges to an $\epsilon$-approximate solution, in $O(1/\eps)$ steps. More precisely,  in  $O(nQ^2/\eps)$ iterations, Wolfe's algorithm returns a point $\|x\|_2^2\leq \|x_*\|_2^2+\epsilon$, 
where $Q=\max_{p\in \P}\|p\|_2$. 
\item We prove (in \Thm{robust-fuji}) a robust version of a theorem by Fujishige \cite{Fujishige80} relating min-norm points on the base polytope to SFM. In particular, we prove that an approximate min-norm point solution provides an approximate solution to SFM as well. More precisely,  if $x$ satisfies $\|x\|_2^2 \leq z^Tx + \epsilon^2$ for all $z\in \P_f$, then, $f(S_x)\leq \min_S f(S) + 2n\epsilon$, where $S_x$ can be constructed efficiently using $x$. 
\end{itemize}
Together, these two results gives us our main result which is a  pseudopolynomial bound on the running time of the Fujishige-Wolfe  algorithm for submodular function minimization.

\begin{restatable}{theorem}{main}\label{thm:main}{\bf (Main Result.)}
Fix a submodular function $f:2^{X}\to\Z$. 
The Fujishige-Wolfe algorithm returns the minimizer of $f$ in $O((n^5\EO + n^7)F^2)$ time where 
$F:= \max_{i=1}^n \left( |f(\{i\})|, |f([n]) - f([n]\setminus i)|\right)$.
\end{restatable}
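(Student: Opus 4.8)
The plan is to run Wolfe's algorithm on the base polytope $\cB_f$ and combine the convergence guarantee of \Thm{wolfe} with the robustness guarantee of \Thm{robust-fuji}, turning an approximate min-norm point into an \emph{exact} minimizer of $f$ by exploiting the fact that $f$ is integer-valued. Concretely, I would (i) bound the diameter parameter $Q=\max_{p\in\cB_f}\|p\|_2$ in terms of $F$, (ii) fix the target accuracy $\eps$ so that \Thm{robust-fuji} produces the exact minimizer, and then (iii) read off the iteration count from the convergence analysis and multiply by the per-iteration cost. For the diameter bound, recall that every vertex of $\cB_f$ is produced by Edmonds' greedy algorithm: for a permutation $\sigma$ its $i$-th coordinate is the marginal $f(\{\sigma(1),\dots,\sigma(i)\})-f(\{\sigma(1),\dots,\sigma(i-1)\})$. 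By diminishing returns (with $f(\emptyset)=0$) each such marginal lies in the interval $[\,f([n])-f([n]\setminus i),\ f(\{i\})\,]$, both endpoints of which are bounded in absolute value by $F$. Hence every coordinate of every vertex, and so of every $p\in\cB_f$, has absolute value at most $F$, giving $\|p\|_2\le\sqrt{n}\,F$ and $Q^2\le nF^2$.

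Next I would choose $\eps$. \Thm{robust-fuji} guarantees $f(S_x)\le\min_S f(S)+2n\eps$ as soon as $x$ satisfies $\|x\|_2^2\le z^T x+\eps^2$ for all $z\in\cB_f$. Taking $\eps$ just below $\tfrac{1}{2n}$ makes $2n\eps<1$; since $f$ is integer-valued, $f(S_x)\le\min_S f(S)+2n\eps<\min_S f(S)+1$ forces $f(S_x)=\min_S f(S)$, so the efficiently reconstructed set $S_x$ is an exact minimizer. It therefore suffices to drive the linear-minimization gap $\max_{z\in\cB_f}\big(\|x\|_2^2-z^T x\big)$ below $\eps^2=\Theta(1/n^2)$.

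The main subtlety is bridging \Thm{wolfe} to the hypothesis of \Thm{robust-fuji}. The former is phrased as a bound on $\|x\|_2^2-\|x_*\|_2^2$, whereas the latter requires control of the gap $\|x\|_2^2-\min_{z}z^T x$, and these are \emph{not} interchangeable up to constants: a point in the relative interior of a face shows that a norm deficit of $\delta$ can certify only an optimality gap of order $\sqrt{\delta}$, which would cost an extra factor of $F$ in the final bound. I would therefore work with the gap directly, observing that Wolfe's progress at each major cycle is measured precisely by this linear-minimization gap, so the same potential argument underlying \Thm{wolfe} shows that a gap below $g$ is reached within $O(nQ^2/g)$ iterations. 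Setting $g=\eps^2=\Theta(1/n^2)$ and substituting $Q^2=O(nF^2)$ yields $O(n^3Q^2)=O(n^4F^2)$ iterations.

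Finally I would account for the per-iteration cost. Each iteration performs one linear minimization over $\cB_f$, executed by the greedy algorithm in $O(n\,\EO+n\log n)$ time, together with an affine min-norm projection onto the current affine hull, which solves a linear system of size at most $n+1$ in $O(n^3)$ time; since each minor cycle deletes at least one active point while each major cycle adds one, the number of minor cycles is charged to the major cycles and does not change the asymptotics. The amortized cost per iteration is thus $O(n\,\EO+n^3)$, and multiplying by the $O(n^4F^2)$ iterations gives $O(n^4F^2)\cdot O(n\,\EO+n^3)=O\big((n^5\EO+n^7)F^2\big)$, as claimed. I expect the gap-versus-norm bridge of the third step to be the crux of the argument; the diameter bound and the integrality rounding are routine, and the per-iteration accounting only requires care in charging the minor cycles against the major cycles.
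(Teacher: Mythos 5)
Your proposal is correct and follows essentially the same route as the paper's proof: bound $Q^2 \leq nF^2$ via the greedy/permutation characterization of the vertices of $\cB_f$, choose $\eps = \Theta(1/n)$ so that the $2n\eps$ error of \Thm{robust-fuji} is below $1$ and integrality forces an exact minimizer, read off $O(nQ^2/\eps^2) = O(n^4F^2)$ iterations from \Thm{wolfe}, and multiply by the $O(n\,\EO + n^3)$ per-iteration cost of the linear-optimization oracle and the affine-minimizer computation. The only remark worth making is that the ``bridge'' you identify as the crux is a non-issue: \Thm{wolfe} as stated already delivers the gap condition $\|x\|^2 \leq x^\top q + \eps^2$ for all $q \in \P$ (the bound relative to $\|x_*\|^2$ is only a stated corollary of it), which is exactly the hypothesis of \Thm{robust-fuji}, so no extra potential argument is needed in the proof of the main theorem.
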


Our analysis suggests that the Fujishige-Wolfe's algorithm is dependent on $F$ and has worse dependence on $n$ than the Iwata-Orlin~\cite{IO09} algorithm. To verify this, we conducted empirical study on several standard SFM problems. However, for the considered benchmark functions, running time of Fujishige-Wolfe's algorithm seemed to be independent of $F$ and exhibited better dependence on $n$ than the Iwata-Orlin algorithm. This is described in \Sec{conc}.

\section{Preliminaries: Submodular Functions and Wolfe's Algorithm}\label{sec:prelim}
\subsection{Submodular Functions and SFM}\label{sec:prelim-a}
Given a ground set $X$ on $n$ elements, without loss of generality we think of it as the first $n$ integers $[n] := \{1,2,\ldots, n\}$.
$f$ be a submodular function.
Since submodularity is translation invariant, we assume $f(\emptyset) = 0$.
For a submodular function $f$, we write $\B_f \subseteq \R^n$ for the associated base polyhedron of $f$ defined in \eqref{eq:base}. 
Given $x\in \R^n$, one can find the minimum value of $q^\top x$ over $q\in \cB_f$ in $O(n\log n + n\EO)$ time using the following greedy algorithm:
Renumber indices such that $x_1 \leq \cdots \leq x_n$. Set $q^{*}_i = f([i]) - f([i-1])$. Then, it can be proved that $q^{*}\in \cB_f$ and is the minimizer of the $x^{\top}q$ for $q \in \cB_f$.

The connection between the SFM problem and the base polytope was first established in the following minimax theorem of Edmonds~\cite{E70}.
\begin{theorem}[Edmonds~\cite{E70}]\label{thm:edmonds}
Given any submodular function $f$ with $f(\emptyset)=0$, we have 
$$\min_{S\subseteq [n]} f(S) = \max_{x\in \cB_f} \left(\sum_{i:x_i < 0} x_i\right)$$
\end{theorem}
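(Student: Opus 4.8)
The statement is a min--max (strong duality) identity, so the plan is to prove the two inequalities separately, with the $\le$ direction being routine and the $\ge$ direction carrying all the weight. For weak duality, fix any $x \in \cB_f$ and any $S \subseteq [n]$; discarding the non-negative coordinates and then including all the missing negative coordinates gives
$$\sum_{i : x_i < 0} x_i \;\le\; \sum_{i \in S} x_i \;=\; x(S) \;\le\; f(S),$$
where the last step is the defining inequality of $\cB_f$. Taking the minimum over $S$ and the maximum over $x$ yields $\max_{x\in\cB_f}\sum_{i:x_i<0}x_i \le \min_S f(S)$.

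For the reverse inequality my main plan is to work with a maximizer. Since $h(x) := \sum_{i:x_i<0} x_i = \min_{T\subseteq[n]} x(T)$ is a minimum of linear functions it is concave and continuous, and $\cB_f$ is compact, so a maximizer $x^*$ exists. Let $N = \{i : x^*_i < 0\}$, so that $h(x^*) = x^*(N)$ by definition. The crux is to show that this set is \emph{tight}, i.e. $x^*(N) = f(N)$. Granting this, $h(x^*) = f(N) \ge \min_S f(S)$, which together with weak duality forces equality throughout; in particular $N$ is itself a minimizer of $f$, recovered directly from $x^*$.

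To prove tightness I would argue by contradiction: if $x^*(N) < f(N)$, the goal is to perturb $x^*$ within $\cB_f$ so as to strictly increase $h$. The natural attempt is to transfer a small mass $\delta$ from a strictly positive coordinate $x^*_j$ to a negative coordinate $x^*_i$ with $i \in N$, which preserves $x^*([n]) = f([n])$ and raises $h$ by $\delta$. The only constraints this move can violate are those of already-tight sets $A$ with $i \in A$ and $j \notin A$. This is where the real work lies: using submodularity one shows the family of tight sets is closed under union and intersection (if $A, B$ are tight then $f(A)+f(B) = x^*(A)+x^*(B) = x^*(A\cap B)+x^*(A\cup B) \le f(A\cap B)+f(A\cup B) \le f(A)+f(B)$, forcing equality), so the tight sets form a lattice. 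This lattice (uncrossing) structure is what one needs to select the pair $(i,j)$ and a feasible improving direction, and also to handle the degenerate case where no free positive coordinate is available. I expect exhibiting this genuinely feasible, objective-increasing perturbation to be the main obstacle; the rest is bookkeeping.

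A slicker but less self-contained alternative bypasses uncrossing via minimax. Writing $h(x) = \min_{y\in[0,1]^n} y^\top x$ and applying Sion's minimax theorem to this bilinear form over the compact convex sets $\cB_f$ and $[0,1]^n$ swaps the order of optimization to $\min_{y\in[0,1]^n}\max_{x\in\cB_f} y^\top x$. The inner maximum is exactly the Lov\'asz extension $\hat f(y)$, evaluated by the greedy procedure already described above; since $\hat f$ is convex and agrees with $f$ on $\{0,1\}^n$, its minimum over $[0,1]^n$ is attained at a $0/1$ point and equals $\min_S f(S)$, which gives the claimed identity at once.
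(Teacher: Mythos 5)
First, a caveat on the comparison itself: the paper never proves this statement --- it imports it from Edmonds~\cite{E70} as a black box --- so your proposal can only be measured against the classical argument, not against anything in the text. Your weak-duality half is correct and complete. The hard direction, however, has a genuine flaw in its central claim: it is \emph{not} true that the negative support $N=\{i: x^*_i<0\}$ of a maximizer $x^*$ of $h(x):=\sum_{i:x_i<0}x_i$ must be tight. Counterexample: $n=3$, $f(\emptyset)=0$, $f(\{1\})=f(\{1,2\})=f(\{1,3\})=1$, $f(\{2\})=f(\{3\})=0$, $f(\{2,3\})=-1$, $f(\{1,2,3\})=0$ (submodular: every marginal of element $1$ equals $1$, and elements $2,3$ have non-increasing marginals). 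The point $x^*=(1,0,-1)$ lies in $\cB_f$ and has $h(x^*)=-1=\min_S f(S)$, so by your own weak duality it is a maximizer; yet $N=\{3\}$ gives $x^*(N)=-1<0=f(N)$. So $N$ is not tight, $N$ is not a minimizer of $f$ (your parenthetical claim fails too --- recovering a minimizer from the negative support is special to the \emph{min-norm} point, i.e.\ Fujishige's theorem, not to arbitrary maximizers of $h$), and the contradiction scheme ``if $x^*(N)<f(N)$ then some transfer improves $h$'' cannot be salvaged as stated, since here no improving perturbation exists. What the perturbation-plus-uncrossing machinery actually delivers is weaker but sufficient: maximality of $x^*$ implies that for every $i\in N$ and every $j\in P:=\{j: x^*_j>0\}$ there is a \emph{tight} set $A_{ij}$ with $i\in A_{ij}$ and $j\notin A_{ij}$ (otherwise $x^*+\delta(e_i-e_j)$, with $e_i$ the $i$-th standard basis vector, is feasible for small $\delta>0$ and increases $h$ by $\delta$); your uncrossing lemma then makes $A:=\bigcup_{i\in N}\bigcap_{j\in P}A_{ij}$ tight, with $N\subseteq A$ and $A\cap P=\emptyset$. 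Since $x^*$ vanishes on $A\setminus N$, we get $f(A)=x^*(A)=x^*(N)=h(x^*)$, hence $\min_S f(S)\le h(x^*)$, which closes the argument (when $P=\emptyset$ simply take $A=[n]$). In short: replace the target ``$N$ is tight'' by ``some tight set sandwiched between $N$ and $\{i: x^*_i\le 0\}$ exists.''

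Your ``slicker'' route has a separate gap at its final step. From ``$\hat f$ is convex and agrees with $f$ on $\{0,1\}^n$'' it does \emph{not} follow that $\min_{y\in[0,1]^n}\hat f(y)$ is attained at a $0/1$ point: convex functions attain their \emph{maxima} at vertices, and a convex function vanishing on $\{0,1\}$ can be strictly negative inside, e.g.\ $y\mapsto y^2-y$. The statement you want is true, but the proof is the layer-cake identity: for $y\in[0,1]^n$, $\hat f(y)=\int_0^1 f(\{i: y_i\ge t\})\,dt$, an average of values of $f$, whence $\hat f(y)\ge\min_S f(S)$, with equality at $y=\1_{S^*}$ for any minimizer $S^*$. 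With that repair --- together with Sion/von Neumann minimax on the compact convex sets $\cB_f$ and $[0,1]^n$, and the greedy evaluation of $\max_{x\in\cB_f}y^\top x$, which the paper itself asserts in its preliminaries --- the second route does yield a complete and arguably the cleanest proof.
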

The following theorem of Fujishige~\cite{Fujishige80} shows the connection between finding the minimum norm point in the base polytope $\B_f$ of a submodular function $f$  and the problem of SFM on input $f$. 
This forms the basis of the Fujishige-Wolfe algorithm. In \Sec{analysis-b}, we prove a robust version of this theorem.

\begin{theorem}[Fujishige's Theorem~\cite{Fujishige80}]\label{thm:fujishige}
Let $f: 2^{[n]} \rightarrow \Z$ be a submodular function and let $\B_f$ be the associated base polyhedron. Let $x^{*}$ be the optimal solution to $\min_{x \in \B_f} ||x||$. 
Define $S = \{i \mid x^{*}_i < 0\}$. Then, $f(S) \leq f(T)$ for every $T\subseteq [n]$.
\end{theorem}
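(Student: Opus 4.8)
The plan is to combine the variational characterization of the minimum-norm point with the greedy linear-minimization oracle over $\B_f$ described in \Sec{prelim-a}, and to close the argument with Edmonds' minimax theorem (\Thm{edmonds}). The target is to show that the negative-coordinate set $S=\{i:x^*_i<0\}$ is \emph{tight}, i.e.\ $x^*(S)=f(S)$, after which Edmonds' theorem finishes everything in one line.

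First I would record the optimality condition for $x^*$. Since $x^*$ is the projection of the origin onto the closed convex set $\B_f$, it satisfies $\la x^*, z-x^*\ra\ge 0$, equivalently $z^\top x^*\ge \|x^*\|^2$ for every $z\in\B_f$, with equality at $z=x^*$. Hence $x^*$ is itself a minimizer of the linear function $z\mapsto z^\top x^*$ over $\B_f$, and the minimum value equals $\|x^*\|^2$. In particular, if I renumber coordinates so that $x^*_1\le\cdots\le x^*_n$ and let $q^*$ be the greedy vertex $q^*_i=f([i])-f([i-1])$, then the greedy algorithm guarantees $q^{*\top}x^*=\min_{z\in\B_f}z^\top x^*=\|x^*\|^2$.

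Next I would extract the tightness of prefix sets by summation by parts. Writing both $\|x^*\|^2=\sum_i x^*_i\,x^*_i$ and $q^{*\top}x^*=\sum_i x^*_i\big(f([i])-f([i-1])\big)$ in terms of the prefix sums $x^*([i])$ and $f([i])$, and using the equality constraint $x^*([n])=f([n])$, their difference collapses to
$$q^{*\top}x^*-\|x^*\|^2=\littlesum_{i=1}^{n-1}(x^*_i-x^*_{i+1})\big(f([i])-x^*([i])\big).$$
Here each factor $x^*_i-x^*_{i+1}\le 0$ while each factor $f([i])-x^*([i])\ge 0$ is a constraint of $\B_f$, so every summand is non-positive; but the left-hand side is $0$ by the previous paragraph, forcing every summand to vanish. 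Thus $x^*([i])=f([i])$ at every strict breakpoint $x^*_i<x^*_{i+1}$. In the sorted order the negative coordinates form a prefix $S=\{1,\dots,k\}$ with $x^*_k<0\le x^*_{k+1}$, so $k$ is a strict breakpoint and we get $x^*(S)=f(S)$ (the degenerate cases $S=\emptyset$ and $S=[n]$ holding trivially). Finally, Edmonds' theorem gives $\min_T f(T)=\max_{x\in\B_f}\sum_{i:x_i<0}x_i\ge\sum_{i:x^*_i<0}x^*_i=x^*(S)=f(S)$, and since trivially $\min_T f(T)\le f(S)$, we conclude $f(S)=\min_T f(T)$, i.e.\ $f(S)\le f(T)$ for all $T$.

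The main obstacle is the summation-by-parts step: translating the purely geometric optimality of the min-norm point into the combinatorial assertion that the prefix sets at breakpoints (and hence $S$) are tight. Everything around it is bookkeeping on top of Edmonds' theorem. I would also expect this identity to be the right lever for the robust version in \Thm{robust-fuji}: an $\epsilon$-slack in the optimality inequality $z^\top x^*\ge\|x^*\|^2$ should propagate through the same identity to give an approximate tightness $x^*(S)\ge f(S)-O(\epsilon)$, which is exactly what is needed to certify approximate minimization.
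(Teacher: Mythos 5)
Your proof is correct and is essentially the paper's own argument: the paper proves \Thm{fujishige} as the $\eps=0$ case of \Thm{robust-fuji}, whose proof rests on exactly your summation-by-parts identity $\|x\|^2 - x^\top z = \sum_{i=1}^{n-1}(x_{i+1}-x_i)\bigl(f([i])-x([i])\bigr)$ with $z$ the greedy vertex, all terms being sign-definite, combined with Edmonds' theorem. Your closing remark that an $\eps$-slack in the optimality inequality propagates through the same identity is precisely how the paper obtains the robust version.
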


\subsection{Wolfe's Algorithm for Minimum Norm Point of a polytope.}\label{sec:wolfe-algo}
\def\LO{\text {LO}}
We now present Wolfe's algorithm for computing the minimum-norm point in an arbitrary  polytope $\P\subseteq \R^n$.
We assume a {\em linear optimization oracle} ($\LO$) which takes input a vector $x\in \R^n$ and outputs a vector $q \in \arg\min_{p\in \P} x^\top p$.

We start by recalling some definitions.
The {\em affine hull} of a finite set $S\subseteq \R^n$ is $\aff(S) = \{ y \mid y = \sum_{z \in S} \alpha_z \cdot z \text{, } \sum_{z \in S} \alpha_z = 1\}$. The {\em affine minimizer} of $S$ is defined as $y = \arg \min_{z \in \aff(S)} ||z||_2$, and $y$ satisfies the following {\em affine minimizer property}: for any $v\in \aff(S)$, $v^\top y = ||y||^2$.
The procedure ${\tt AffineMinimizer}(S)$ returns $(y,\alpha)$ where
 $y$ is the affine minimizer and $\alpha = (\alpha_s)_{s\in S}$ is the set of coefficients expressing $y$ as an affine combination of points in $S$.
 This procedure can be naively implemented in $O(|S|^3 + n|S|^2) $ as follows. Let $B$ be the $n\times |S|$ matrix where each column in a point in $S$.
 Then $\alpha = (B^\top B)^{-1}\1/\1^\top (B^\top B)^{-1}\1$ and $y = B\alpha$.

\begin{algorithm} 
\caption{Wolfe's Algorithm}
\label{algo:wolfe}
\begin{enumerate}
\item Let $q$ be an arbitrary vertex of $\P$. Initialize $x \leftarrow q$. We always maintain $x = \sum_{i\in S} \lambda_i q_i$ as a convex combination of a subset $S$ of vertices of $\P$.  Initialize $S = \{q\}$ and $\lambda_1= 1$.
\item {\bf WHILE}{\tt (true)}:   (MAJOR CYCLE)
	\begin{enumerate}
		\item $q := \LO(x)$. \hfill {\em \small  // Linear Optimization:  $q \in \arg\min_{p\in\cB} x^\top p$.}
		\item {\bf IF} $||x||^2 \leq x^\top q + \eps^2$  {\bf THEN} {\tt break.} \hfill {\em \scriptsize // Termination Condition. Output $x$.}
		\item $S := S \cup \{q\}$.\hfill 
		\item {\bf WHILE}{\tt(true):} (MINOR CYCLE)
		\begin{itemize}
		\item[]
		\begin{enumerate}
		\item $(y,\alpha) = {\tt Affine Minimizer}(S)$.  \hfill {\em \small //$y = \arg\min_{z\in \aff(S)} ||z||$.}
		\item {\bf IF} $\alpha_i\geq 0$ for all $i$ {\bf THEN} {\tt break.} \hfill {\em \small //If $y\in \conv(S)$, then end minor loop.}
		\item {\bf ELSE}
		\item[] {\em \small // If $y \notin \conv(S)$, then update $x$ to the intersection of the boundary of $\conv(S)$ and the segment joining $y$ and previous $x$. 
				Delete points from $S$ which are not required to describe the new $x$ as a convex combination. }
		\item[]  $\theta := \min_{i:\alpha_i < 0} \lambda_i/(\lambda_i - \alpha_i)$ \hfill {\em \small // Recall, $x=\sum_i\lambda_iq_i$.}
		\item[] Update $x ~\leftarrow \theta y ~+ (1-\theta)x$.  \hfill {\em \small // By definition of $\theta$, the new $x$ lies in  $\conv(S)$.}
		\item[] Update $\lambda_i \leftarrow \theta\alpha_i + (1-\theta)\lambda_i$. \hfill{\em \small //This sets the coefficients of the new $x$ }
		\item[] $S =\{i: \lambda_i > 0\}$. \hfill{\em \small // Delete points which have $\lambda_i = 0$. This deletes at least one point.}
		\end{enumerate}
		\end{itemize}
	\item Update $x\leftarrow y$. \hfill {\em \small // After the minor loop terminates, $x$ is updated to be the affine minimizer of the current set $S$.}
	\end{enumerate}
	
	\item \textbf{RETURN} $x$.
\end{enumerate}
\end{algorithm}
When $\eps=0$, the algorithm on termination (if it terminates) returns the minimum norm point in $\P$ since $||x||^2\leq x^\top x_* \leq ||x||\cdot||x_*||$.
For completeness, we sketch Wolfe's argument in~\cite{W71} of finite termination. 
Note that $|S| \leq n$ always; otherwise the affine minimizer is $0$ which either terminates the program or starts a minor cycle which decrements $|S|$.
Thus, the number of minor cycles in a major cycle $\leq n$, and it suffices to bound the number of major cycles. Each major cycle is associated with a set $S$ whose affine minimizer, which is the current $x$, lies in the convex hull of $S$. Wolfe calls such sets {\em corrals}.
Next, we show that $||x||$ strictly decreases across iterations (major or minor cycle) of the algorithm, which proves that no corral repeats, thus bounding the number of major cycles by the number of corrals. The latter is at most $N\choose n$, where $N$ is the number of vertices of $\P$.

Consider iteration $j$ which starts with $x_j$ and ends with $x_{j+1}$. Let $S_j$ be the set
$S$ at the beginning of iteration $j$. If the iteration is a major cycle, then $x_{j+1}$ is the affine minimizer of $S_j \cup\{q_j\}$ where $q_j = \LO(x_j)$. 
Since $x_j^\top q_j < ||x_j||^2$ (the algorithm doesn't terminate in iteration $j$) and $x_{j+1}^\top q_j = ||x_{j+1}||^2$ (affine minimizer property), we get $x_j\neq x_{j+1}$, and so $||x_{j+1}|| < ||x_j||$ (since the affine minimizer is unique). If the iteration is a minor cycle, then 
$x_{j+1} = \theta x_j + (1-\theta) y_j$, where $y_j$ is the affine minimizer of $S_j$ and $\theta < 1$. Since $||y_j|| < ||x_j||$ ($y_j\neq x_j$ since $y_j\notin\conv(S_j)$), we get $||x_{j+1}||< ||x_{j}||$.

\section{Analysis} \label{sec:analysis}
Our refined analysis of Wolfe's algorithm is encapsulated in the following theorem.
\begin{theorem}\label{thm:wolfe}
Let $\P$ be an {\em arbitrary} polytope such that the {\em maximum} Euclidean norm of any vertex of $\P$ is at most $Q$. 
After $O(nQ^2/\eps^2)$ iterations, Wolfe's algorithm returns a point $x\in \P$ which satisfies 
$||x||^2 \leq x^\top q + \eps^2$, for all points $q\in \P$. In particular, this implies $||x||^2 \leq ||x_*||^2 + 2\eps^2$.
 \label{iterationbound}
\end{theorem}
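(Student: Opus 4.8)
The plan is to read Wolfe's algorithm as a conditional-gradient (Frank--Wolfe) method for minimizing the smooth convex objective $\phi(x)=\half\norm{x}^2$ over $\P$. Through this lens the gradient at $x$ is $x$ itself, the oracle $\LO(x)$ returns the Frank--Wolfe vertex $q=\arg\min_{p\in\P}x^\top p$, and the termination quantity $h:=\norm{x}^2-x^\top q=x^\top(x-q)$ is exactly the Frank--Wolfe duality gap. The excerpt already bounds the number of minor cycles inside one major cycle by $n$, so it suffices to bound the number of major cycles before the gap first drops to $h\le\eps^2$. I would therefore track the potential $\norm{x}^2$ across major cycles and show it decreases by a definite amount whenever $h>\eps^2$.

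First I would prove the key per-major-cycle progress estimate: if major cycle $j$ starts at $x_j$ with $q_j=\LO(x_j)$ and does not terminate (so $h_j>\eps^2$), then the next corral point satisfies $\norm{x_{j+1}}^2\le\norm{x_j}^2-h_j^2/(4Q^2)$. The feasible witness is the best point on the segment $[x_j,q_j]\subseteq\P$: writing $z(\gamma)=x_j+\gamma(q_j-x_j)$ one has $\norm{z(\gamma)}^2=\norm{x_j}^2-2\gamma h_j+\gamma^2\norm{q_j-x_j}^2$, and optimizing $\gamma\in[0,1]$ together with $\norm{q_j-x_j}^2\le 4Q^2$ (both points have norm at most $Q$) produces a point $z^\ast\in\P$ with $\norm{z^\ast}^2\le\norm{x_j}^2-h_j^2/(4Q^2)$ (the boundary case $h_j>\norm{q_j-x_j}^2$ is handled separately and is even better). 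It then remains to show $\norm{x_{j+1}}\le\norm{z^\ast}$.

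I expect this last inequality to be the main obstacle, because the minor cycle does not simply move $x_j$ toward the feasible point $z^\ast$. After $q_j$ is inserted, the first affine minimizer $y^{(0)}$ of $S_j\cup\{q_j\}$ is the minimum-norm point of the \emph{whole} affine hull and hence satisfies $\norm{y^{(0)}}^2\le\norm{z^\ast}^2$; but $y^{(0)}$ may lie outside $\conv(S_j\cup\{q_j\})$, and the subsequent affine minimizers are taken over nested subsets whose affine hulls only shrink, so their norms can only increase. Consequently $x_{j+1}$, the affine minimizer of the final corral, can have strictly larger norm than $y^{(0)}$, and one cannot conclude the progress bound by ``jumping'' to $y^{(0)}$. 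The route I would take is to prove that the warm-started minor cycle --- initialized at $x_j$, which is already the exact minimum-norm point of the previous corral $\conv(S_j)$, and augmented by the single new vertex $q_j$ --- terminates at the exact minimum-norm point of $\conv(S_j\cup\{q_j\})$. Since $z^\ast\in\conv(S_j\cup\{q_j\})$, this immediately yields $\norm{x_{j+1}}\le\norm{z^\ast}$. Establishing that one major cycle solves the augmented problem exactly is the delicate combinatorial step: I would argue from the active-set structure (each minor step deletes only a vertex whose convex coefficient has reached zero, while $q_j$ itself cannot be deleted because $x_j^\top q_j<\norm{x_j}^2$ forces its coefficient to stay positive) that the process cannot terminate at a corral that is suboptimal for the augmented point set.

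Finally I would convert the per-cycle progress into the claimed $O(Q^2/\eps^2)$ bound on major cycles. Telescoping the progress inequality gives only $\sum_j h_j^2\le 4Q^2\norm{x_1}^2=O(Q^4)$, hence the crude bound $O(Q^4/\eps^4)$; to sharpen it to $O(Q^2/\eps^2)$ I would invoke convexity, $h_j\ge e_j:=\half(\norm{x_j}^2-\norm{x_\ast}^2)$, which converts the progress bound into the recursion $e_{j+1}\le e_j-e_j^2/(8Q^2)$ with solution $e_j=O(Q^2/j)$. Substituting this decay back into the progress inequality summed over a window $[T/2,T]$ yields $\min_{j\le T}h_j=O(Q^2/T)$, so the gap first falls to $\eps^2$ within $O(Q^2/\eps^2)$ major cycles, i.e.\ $O(nQ^2/\eps^2)$ iterations overall. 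The closing ``in particular'' claim then follows at termination from $\norm{x}^2-\norm{x_\ast}^2=2(\norm{x}^2-x^\top x_\ast)-\norm{x-x_\ast}^2\le 2(\norm{x}^2-x^\top q)\le 2\eps^2$, using that $q=\LO(x)$ minimizes $x^\top p$ over $\P\ni x_\ast$.
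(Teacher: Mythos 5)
You correctly isolated the crux — everything in your plan hinges on the inequality $\norm{x_{j+1}}\le\norm{z^\ast}$, which you propose to get from the claim that a warm-started major cycle terminates at the \emph{exact} minimum-norm point of $\conv(S_j\cup\{q_j\})$. That claim is false, and it cannot be repaired by the active-set argument you sketch: the fact that $q_j$ keeps a positive coefficient (this is \Lem{new-vertex-survives} in the paper, and is true) says nothing about the other vertices, and a minor cycle can permanently delete a vertex of $S_j$ that lies in the support of the true minimizer of the augmented hull; once deleted it is never re-inserted within that major cycle, so the cycle must end at a suboptimal corral. Concretely, take $S_j=\{a,b,c\}$ with $a=(-3,1,-3)$, $b=(2,-1,-3)$, $c=(-1,1,-3)$: its affine minimizer is $x_j=(0,0,-3)$, lying in $\conv(S_j)$ with coefficients $(\tfrac14,\tfrac12,\tfrac14)$, so this is a legitimate corral. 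Let the oracle return $q=(1,-2,-2)$ (indeed $x_j^\top q=6<9=\norm{x_j}^2$). The affine minimizer of all four points is the origin, which lies outside the hull; the first minor cycle stops at $(0,0,-2.8)$ and deletes $a$ (its coefficient hits zero first, with $\theta=\tfrac{1}{15}$); the affine minimizer of $\{b,c,q\}$ is again outside its hull, and the second minor cycle deletes $b$; the cycle then terminates at the affine minimizer of $\{c,q\}$, namely $x_{j+1}=(\tfrac17,-\tfrac57,-\tfrac{17}{7})$, with $\norm{x_{j+1}}^2=\tfrac{45}{7}\approx 6.429$. But the minimum-norm point of $\conv\{a,b,c,q\}$ is $p=(0,-0.8,-2.4)$, with $\norm{p}^2=6.4$, and it places weight $0.1$ on the deleted vertex $a$ (check Wolfe's criterion: $p^\top a=p^\top c=p^\top q=6.4=\norm{p}^2$ and $p^\top b=8$). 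So the major cycle ends strictly above the optimum of the augmented set, refuting the exactness claim. (Exactness does hold, by a separating-hyperplane argument, when there is at most one minor cycle; it is precisely the second deletion that kills it, as above.)

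This is exactly the difficulty the paper engineers around rather than solves. It never claims any per-major-cycle optimality or progress; it proves a quantitative norm decrease only for \emph{good} major cycles — those with at most one minor cycle — by direct computation (\Lem{major-progress} and the $\theta$-interpolation in \Lem{main-decrease}), explicitly conceding that progress across major cycles with two or more minor cycles is out of reach. Cycles with many minor cycles are then handled purely combinatorially: each minor cycle shrinks $S$ and $|S|\le n$, so any $3n+1$ consecutive iterations contain a good major cycle (\Lem{MmMm}); this counting step is what produces the factor $n$ in the $O(nQ^2/\eps^2)$ bound. Your remaining ingredients — the line-search estimate giving a $h_j^2/4Q^2$ decrease, the recursion $e_{j+1}\le e_j-e_j^2/8Q^2$ yielding $e_j=O(Q^2/j)$, and the closing ``in particular'' computation — are sound and closely parallel the paper's own handling of \Thm{oneminloop}, but they all sit downstream of the false exactness claim, so the proposal as written does not prove the theorem.
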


The above theorem shows that Wolfe's algorithm converges to the minimum norm point at an $1/t$-rate. 
We stress that the above is for {\em any} polytope. To apply this to SFM, we 
prove the following robust version of Fujishige's theorem connecting the minimum norm point in the base polytope and the set minimizing the submodular function value.

\begin{restatable}{theorem}{robustfuji}
\label{thm:robust-fuji}
Fix a submodular function $f$ with base polytope $\B_f$. Let $x\in \cB_f$ be such that $||x||^2 \leq x^\top q + \eps^2$ for all $q\in \cB_f$.
Renumber indices such that $x_1 \leq \cdots \leq x_n$.
Let
$S = \{1,2,\dots,k\},$where $k$ is smallest index satisfying (C1) $x_{k+1} \geq 0$ and  (C2) $x_{k+1} - x_k \geq \eps/n$.
Then, $f(S) \leq f(T) + 2n\eps$ for any subset $T\subseteq S$. In particular, if $\eps = \frac{1}{4n}$ and $f$ is integer-valued, then $S$ is a minimizer.
\end{restatable}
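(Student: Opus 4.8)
The plan is to reduce the claim to two complementary estimates and glue them together with Edmonds' theorem (\Thm{edmonds}). Since $x\in\cB_f$ is feasible, Edmonds' theorem gives $\min_T f(T)\ge \sum_{i:x_i<0}x_i$, so it suffices to prove $f(S)\le \sum_{i:x_i<0}x_i+2n\eps$; this in fact establishes $f(S)\le f(T)+2n\eps$ for \emph{every} $T$, which is stronger than the stated inequality and is what the integer-rounding corollary actually needs. Because the indices are sorted and (C1) forces $x_{k+1}\ge 0$, every negative coordinate already lies in $S=\{1,\dots,k\}$, so $\sum_{i:x_i<0}x_i = x([k])-P$, where $P:=\sum_{i\le k,\,x_i\ge 0}x_i\ge 0$ is the ``positive mass'' inside $S$. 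Hence the target splits into (A) an upper bound $f(S)=f([k])\le x([k])+n\eps$, and (B) a bound $P\le n\eps$.

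The engine behind (A) is the greedy description of the linear optimizer over $\cB_f$. For the ordering $x_1\le\cdots\le x_n$ the greedy point $q^*$ with $q^*_i=f([i])-f([i-1])$ minimizes $x^\top q$ over $\cB_f$, so the hypothesis applied at $q=q^*$ gives $\|x\|^2-x^\top q^*\le \eps^2$. Writing $\|x\|^2=\sum_i x_i\bigl(x([i])-x([i-1])\bigr)$ and $x^\top q^*=\sum_i x_i\bigl(f([i])-f([i-1])\bigr)$ and applying summation by parts (the boundary terms vanish since $f(\emptyset)=x(\emptyset)=0$ and $f([n])=x([n])=f(X)$), I expect the clean identity
$$\|x\|^2-x^\top q^* \;=\; \sum_{i=1}^{n-1}(x_{i+1}-x_i)\bigl(f([i])-x([i])\bigr).$$
Every factor is nonnegative — $x_{i+1}-x_i\ge 0$ by sorting and $f([i])-x([i])\ge 0$ since $x\in\cB_f$ — so with $g_i:=x_{i+1}-x_i$ and $h_i:=f([i])-x([i])$ I get $\sum_i g_ih_i\le\eps^2$. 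Condition (C2) says $g_k\ge\eps/n$, hence $h_k\le \eps^2/g_k\le n\eps$, which is precisely (A).

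The bound (B) on the positive mass is where the minimality of $k$ and the interplay of (C1) and (C2) do the real work, and I expect this to be the crux of the argument. Let $j$ be the least index with $x_j\ge 0$, so the nonnegative coordinates of $S$ are exactly $x_j,\dots,x_k$ and $P=\sum_{i=j}^k x_i$. For every index $i$ with $j-1\le i\le k-1$ condition (C1) already holds (the next coordinate is $\ge 0$), so by minimality of $k$ condition (C2) must fail there, i.e. $g_i<\eps/n$; together with $x_{j-1}<0$ this telescopes to $x_i<(i-j+1)\eps/n$ for each $i\in[j,k]$. Summing the resulting triangular series yields $P<\frac{\eps}{n}\cdot\frac{(k-j+1)(k-j+2)}{2}\le\frac{(n+1)\eps}{2}\le n\eps$ (the degenerate case $j=1$, where $x$ has no negative coordinate, is handled separately). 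Combining (A) and (B), $f(S)\le x([k])+n\eps=\bigl(\sum_{i:x_i<0}x_i+P\bigr)+n\eps\le \sum_{i:x_i<0}x_i+2n\eps\le \min_T f(T)+2n\eps$. Finally, taking $\eps=\tfrac1{4n}$ makes the slack $2n\eps=\tfrac12<1$, so for integer-valued $f$ the inequality $f(S)\le f(T)+\tfrac12$ forces $f(S)\le f(T)$, i.e. $S$ is an exact minimizer.
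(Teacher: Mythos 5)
Your proposal is correct and follows essentially the same route as the paper's proof: the same telescoping identity $\sum_{i=1}^{n-1}(x_{i+1}-x_i)\bigl(f([i])-x([i])\bigr)\le\eps^2$ obtained by testing the hypothesis at the greedy point, the same use of (C2) to get $f(S)-x(S)\le n\eps$, the same minimality-of-$k$ telescoping to bound the nonnegative mass inside $S$ by $n\eps$, and Edmonds' theorem to finish. The differences are only organizational: you invoke Edmonds up front rather than at the end, and you spell out the triangular-sum estimate and flag the degenerate all-nonnegative case ($j=1$), which the paper's terser argument leaves implicit.
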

\noindent
\Thm{wolfe} and \Thm{robust-fuji} implies our main theorem. 
\main*
\begin{proof}
The vertices of $\cB_f$ are well understood: for every permutation $\sigma$ of $[n]$, we have a vertex with $x_{\sigma(i)} = f(\{\sigma(1),\ldots,\sigma(i)\}) - f(\{\sigma(1),\ldots,\sigma(i-1)\})$.
By submodularity of $f$, we get for all $i$, $|x_i|\leq F$. Therefore, for any point $x\in \cB_f$, $||x||^2 \leq nF^2$.
Choose $\eps = 1/4n$. From \Thm{wolfe} we know that if we run $O(n^4F^2)$ iterations of Wolfe, we will get a point $x\in \cB_f$ such that $||x||^2 \leq x^\top q +\eps^2$ for all $q\in \cB_f$. \Thm{robust-fuji} implies this solves the SFM problem.
The running time for each iteration is dominated by the time for the subroutine to compute the affine minimizer of $S$ which is at most $O(n^3)$, and 
the linear optimization oracle. For $\cB_f$, $\LO(x)$ can be implemented in $O(n\log n + n\EO)$ time. This proves the theorem.
\end{proof}
\noindent
We prove \Thm{wolfe} and \Thm{robust-fuji} in \Sec{wolfe_analysis} and \Sec{analysis-b}, respectively.
\subsection{Analysis of Wolfe's Min-norm Point Algorithm}\label{sec:wolfe_analysis}
The stumbling block in the analysis of Wolfe's algorithm is the interspersing of major and minor cycles which oscillates the size of $S$ preventing it from being a good measure of progress. Instead, in our analysis, we use the norm of $x$ as the measure of progress. Already we have seen that $||x||$ strictly decreases.
It would be nice to quantify how much the decrease is, say, across one major cycle. This, at present, is out of our reach even for major cycles which contain two or more minor cycles in them. However, we can prove significant drop in norm in major cycles which have at most one minor cycle in them. We call such major cycles {\em good}.
The next easy, but very useful, observation is the following: one cannot have too many bad major cycles without having too many good major cycles.

\begin{lemma}\label{lem:MmMm}
In any consecutive $3n+1$ iterations, there exists at least one good major cycle.\label{lem:Mm}
\end{lemma}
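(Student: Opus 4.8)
The plan is to track two statistics over the window of $3n+1$ consecutive iterations: the number $M$ of major cycles and the number $m$ of minor cycles, so that $M + m = 3n+1$. I would lean on three bookkeeping facts about the working set $S$. Every major cycle inserts exactly one new vertex (step (c)), and this insertion is genuine: whenever the algorithm does not halt we have $q = \LO(x) \notin S$, since if $q \in S$ then the affine-minimizer property forces $x^\top q = \|x\|^2$, which immediately triggers the termination test $\|x\|^2 \leq x^\top q + \eps^2$. Dually, every minor cycle deletes at least one vertex, and $1 \leq |S| \leq n$ holds throughout. The strategy is to pair a cheap unconditional lower bound on $M$ with an upper bound on $M$ that must hold if no good major cycle occurs, and then collide the two.

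First I would prove the unconditional bound $M \geq n+1$. Let $\Delta$ be the net change of $|S|$ across the window and let $D \geq m$ be the total number of vertex deletions. Since each major cycle adds exactly one vertex, $\Delta = M - D$, so $M - m \geq M - D = \Delta$; moreover $\Delta \geq 1 - n$ because $|S|$ starts at most $n$ and ends at least $1$. Combining, $M - m \geq 1 - n$, i.e. $m \leq M + n - 1$, and substituting into $M + m = 3n+1$ yields $3n+1 \leq 2M + n - 1$, hence $M \geq n+1$. This step uses nothing about which cycles are good.

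Next I would argue by contradiction, assuming every major cycle in the window is \emph{bad}, i.e. contains at least two minor cycles. List the major cycles in order as $C_1, \dots, C_M$ and let $s_i$ denote $|S|$ at the start of $C_i$. For $i \leq M-1$ the entire minor-cycle block of $C_i$ lies inside the window (it ends exactly when $C_{i+1}$ begins, and $C_{i+1}$ is in the window), so $C_i$ is a \emph{complete} bad cycle: it adds one vertex and then deletes at least two, giving $s_{i+1} \leq s_i + 1 - 2 = s_i - 1$. Iterating, $s_M \leq s_1 - (M-1)$, and with $s_1 \leq n$, $s_M \geq 1$ this forces $M - 1 \leq n - 1$, i.e. $M \leq n$, contradicting $M \geq n+1$. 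Hence at least one of $C_1, \dots, C_{M-1}$ is good, which proves the lemma.

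The main obstacle, and essentially the only delicate point, is the treatment of the window boundaries: the $3n+1$ iterations may begin or end in the middle of some major cycle's minor-cycle block, so a naive "every bad major cycle drops $|S|$ by one" accounting applied to all $M$ cycles is not justified (the last block may be truncated, and leading minor cycles may belong to a pre-window major cycle). The fix is to invoke the decrement argument only for $C_1,\dots,C_{M-1}$, whose minor blocks are guaranteed to be fully contained in the window; this is precisely what makes the constant come out to $3n+1$ rather than something smaller. A secondary item to pin down carefully is the genuine $+1$ to $|S|$ per major cycle, which is the affine-minimizer-plus-termination argument noted above.
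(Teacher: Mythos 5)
Your proof is correct and is essentially the same counting argument as the paper's: both rest on exactly the facts that each major cycle adds (at most) one vertex to $S$, each minor cycle deletes at least one, a bad major cycle contains at least two minor cycles, and $1 \leq |S| \leq n$. The paper packages these into one direct inequality chain (a run of $r$ all-bad iterations has at most $r/3$ major-cycle steps, so $|S_F| \leq n - r/3$, forcing $r \leq 3n$), while you rearrange the same inequalities into two colliding bounds on the number of major cycles and are somewhat more explicit about window-boundary effects; the substance is identical.
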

\begin{proof}
Consider a run of $r$ iterations where all major cycles are bad, and therefore contain $\geq 2$ minor cycles. Say there are $k$ major cycles and $r-k$ minor cycles, and so $r - k \geq 2k$ implying $r\geq 3k$.
Let $S_I$ be the set $S$ at the start of these iterations and $S_F$ be the set at the end. We have $|S_F| \leq |S_I| + k - (r-k) \leq |S_I|+2k-r\leq n - \frac{r}{3}$.
Therefore, $r\leq 3n$, since $|S_F|\geq 0$.\end{proof}

Before proceeding, we introduce some notation. 
\begin{definition}
Given a point $x\in \P$, let us denote $\err(x):= ||x||^2 - ||x_*||^2$. Given a point $x$ and $q$, let $\Delta(x,q) := ||x||^2 - x^\top q$ and
let $\Delta(x) := \max_{q\in \P}\Delta(x,q) = ||x||^2 - \min_{q\in \P}x^\top q$.
Observe that $\Delta(x) \geq \err(x)/2$ since $\Delta(x)\geq ||x||^2 - x^\top x_* \geq (||x||^2 -||x_*||^2)/2$.
\end{definition}
We now use $t$ to index all good major cycles. 
Let $x_t$ be the point $x$ at the beginning of the  $t$-th good major cycle. 
The next theorem shows that the norm significantly drops across good major cycles.

\begin{theorem}\label{thm:oneminloop}
For $t$ iterating over good major cycles, 
$\err(x_t) - \err(x_{t+1}) \geq \Delta^2(x_t)/8Q^2$.
\end{theorem}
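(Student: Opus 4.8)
The plan is to bound the drop in $\|x\|^2$ \emph{within} the $t$-th good major cycle and then invoke monotonicity. Since we have already established that $\|x\|$ strictly decreases at every iteration, the point $x_{t+1}$ (the start of the next good major cycle) satisfies $\|x_{t+1}\|\le\|z\|$, where $z$ is the point produced at the end of the $t$-th good major cycle: every iteration in between consists of bad major cycles and their minor cycles, all of which only decrease the norm. Because $\err(x_t)-\err(x_{t+1})=\|x_t\|^2-\|x_{t+1}\|^2$, it therefore suffices to show $\|x_t\|^2-\|z\|^2\ge \Delta^2(x_t)/8Q^2$.

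The workhorse is a line-search estimate. Writing $q_t=\LO(x_t)$, so that $x_t^\top(q_t-x_t)=-\Delta(x_t)$ and $\|q_t-x_t\|^2\le 4Q^2$, minimizing $\|x_t+\lambda(q_t-x_t)\|^2$ over $\lambda\in[0,1]$ yields a point of squared norm at most $\|x_t\|^2-\Delta^2(x_t)/4Q^2$ (take the interior optimum $\lambda=\Delta(x_t)/\|q_t-x_t\|^2$ when feasible, and the endpoint $\lambda=1$ otherwise, using $\Delta(x_t)\le 2Q^2$). In the zero-minor-cycle case this already finishes the proof: the returned $z$ is the affine minimizer of $S_t\cup\{q_t\}$ and lies in its convex hull, hence $z$ is the minimum-norm point of $\conv(S_t\cup\{q_t\})$; since the whole segment $[x_t,q_t]$ lies in $\conv(S_t\cup\{q_t\})$, we get $\|z\|^2\le\|x_t\|^2-\Delta^2(x_t)/4Q^2$, which is stronger than required.

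The one-minor-cycle case is the crux and the only place the factor $8$ (rather than $4$) is needed. Here the affine minimizer $y_0$ of $S_t\cup\{q_t\}$ has a negative coordinate, the algorithm retracts to the boundary point $x'=\theta y_0+(1-\theta)x_t$ and deletes at least one vertex to form $S'$, and $z$ is the affine minimizer of $S'$, lying in $\conv(S')$. The obstacle is that $\conv(S')$ need \emph{not} contain $x_t$ (the deleted vertex may have carried positive weight in $x_t$), so $z$ is \emph{not} the minimum-norm point of $\conv(S_t\cup\{q_t\})$ and the segment $[x_t,q_t]$ is no longer available. I would get around this using the vertex $q_t$, which does survive: its coordinate in $y_0$ must be strictly positive, since otherwise the retraction would occur with $\theta=0$ and immediately delete $q_t$, leaving $\|x\|$ unchanged and contradicting strict monotonicity. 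Hence $q_t\in S'$, the segment $[x',q_t]$ lies in $\conv(S')$, and $\|z\|^2\le\min_{\mu\in[0,1]}\|x'+\mu(q_t-x')\|^2$.

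It then remains to combine two contributions. Using the affine-minimizer identity $x_t^\top y_0=\|y_0\|^2$ one computes $\|x_t\|^2-\|x'\|^2=\theta(2-\theta)D$ with $D:=\|x_t\|^2-\|y_0\|^2$, and the line-search estimate applied on $\aff(S_t\cup\{q_t\})$ gives $D\ge\Delta^2(x_t)/4Q^2$. A short computation also yields $(x')^\top(q_t-x')=(1-\theta)(\theta D-\Delta(x_t))$. Splitting on the sign of $\theta D-\Delta(x_t)$ closes the argument: if $\theta D\ge\Delta(x_t)$ the retraction alone drops the squared norm by $\theta(2-\theta)D\ge\theta D\ge\Delta(x_t)\ge\Delta^2(x_t)/8Q^2$; otherwise moving from $x'$ toward $q_t$ strictly decreases the norm, and adding this gain to $\theta(2-\theta)D$, after substituting $D\ge\Delta^2(x_t)/4Q^2$, leaves a total drop of at least $\Delta^2(x_t)/8Q^2$. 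I expect this final elementary minimization over $\theta\in[0,1]$ to be the one genuinely fiddly step; everything else is geometry of the affine minimizer.
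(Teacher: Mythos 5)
Your proposal is correct and is essentially the paper's own argument in different notation: the same reduction via strict monotonicity to the drop within a single good cycle, the same two cases (zero vs.\ one minor cycle), the same $\Delta^2(x_t)/4Q^2$ basic estimate (you derive it by a Frank--Wolfe-style line search where the paper uses Cauchy--Schwarz against the affine minimizer, \Lem{geom}), the same retraction identities (your $\theta$ is the paper's $1-\theta$, so your $\theta(2-\theta)D$ and $(1-\theta)(\Delta(x_t)-\theta D)$ are exactly the paper's $(1-\theta^2)A$ and $\theta(\Delta(x_t)-(1-\theta)A)$), the same survival fact $q_t\in S'$, and an equivalent final case split; I checked that the ``fiddly'' minimization you defer does close with the constant $1/8$, just as the paper's does.

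The one flawed step is your justification that $q_t$ survives the minor cycle. If the coefficient of $q_t$ in $y_0$ were exactly $0$, the retraction parameter would \emph{not} be $0$: it is determined by old vertices with negative coefficients, all of which carry weight $\lambda_i>0$ in $x_t$, so $\theta>0$, the norm strictly decreases, and your monotonicity contradiction evaporates even though $q_t$ gets deleted. This boundary case is real as stated and needs either the paper's argument (\Lem{new-vertex-survives}: if $q_t\notin S'$ then $S'\subseteq S_t$, so the affine minimizer of $S'$ has norm at least $\|x_t\|$, contradicting strict decrease) or the separate observation that a zero coefficient forces $y_0\in\aff(S_t)$, hence $y_0=x_t$ by uniqueness of the affine minimizer, contradicting $x_t^\top q_t<\|x_t\|^2$.
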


We now complete the proof of \Thm{wolfe} using \Thm{oneminloop}.
\begin{proof}[{\bf Proof of Theorem \ref{thm:wolfe}}]
Using \Thm{oneminloop}, we get that $\err(x_t) - \err(x_{t+1}) \geq \err(x_t)^2/32Q^2$ since $\Delta(x) \geq \err(x)/2$ for all $x$.
We claim that in $t^* \leq 64Q^2/\eps^2$ good major cycles, we reach $x_t$ with $\err(x_{t^*}) \leq \eps^2$. 
To see this rewrite as follows: \[\err(x_{t+1}) \leq \err(x_t)\left(1 - \frac{\err(x_t)}{32Q^2}\right), \quad \textrm{for all $t$}.\] 
Now let $e_0 := \err(x_0)$. Define $t_0,t_1,\ldots$ such that for all $k\geq 1$ we have $\err(x_t)  > e_0/2^k$ for $t\in [t_{k-1},t_k)$.
That is, $t_k$ is the first time $t$ at which $\err(x_t)\leq e_0/2^k$.
Note that for $t\in [t_{k-1},t_{k})$, we have $\err(x_{t+1}) \leq \err(x_t)\left(1 - \frac{e_0}{32Q^22^k}\right)$.
This implies in $32Q^22^k/e_0$ time units after $t_{k-1}$, we will have $\err(x_t) \leq \err(x_{t_{k-1}})/2$; we have used the fact that
$(1-\delta)^{1/\delta} < 1/2$ when $\delta< 1/32$. That is, $t_k \leq t_{k-1} + 32Q^22^k/e_0$. We are interested in $t^* = t_{K}$ where $2^K = e_0/\eps^2$.
We get $t^* \leq \frac{32Q^2}{e_0}\left(1 + 2 + \cdots + 2^K\right) \leq 64Q^22^K/e_0 = 64Q^2/\eps^2$.

Next, we claim that in $t^{**} < t^* + t'$ good major cycles, where $t' = 8Q^2/\eps^2$, we obtain an $x_{t^{**}}$ with 
$\Delta(x_{t^{**}}) \leq \eps^2$.  This is because, if not, then, using \Thm{oneminloop}, in each of the good major cycles $t^*+1, t^*+2, \ldots t^*+t'$, $\err(x)$ falls  additively by $> \epsilon^4/8Q^2$ and thus $\err(x_{t^*+t'}) < \err(x_{t^*}) - \eps^2 \leq 0$, which is a contradiction. Therefore, in $O(Q^2/\eps^2)$ good major cycles, the algorithm obtains an $x = x_{t^{**}}$ with $\Delta(x) \leq \eps^2$, proving 
\Thm{wolfe}. 
\end{proof}
 The rest of this subsection is dedicated to proving \Thm{oneminloop}.

\paragraph{Proof of \Thm{oneminloop}:} We start off with a simple geometric lemma.
\begin{lemma}\label{lem:geom}
Let $S$ be a subset of $\R^n$ and suppose $y$ is the minimum norm point of $\aff(S)$. Let $x$ and $q$ be {\em arbitrary} points in $\aff(S)$. 
Then,
\begin{equation}
\label{eq:geom}
||x||^2 - ||y||^2 \geq \frac{\Delta(x,q)^2}{4Q^2}
\end{equation}
where $Q$ is an upper bound on $||x||,||q||$. \label{lem:geometric}
\end{lemma}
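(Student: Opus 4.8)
The plan is to exploit the fact that $\aff(S)$ is an affine subspace, so the entire line through $x$ and $q$ lies inside it, and every point on that line therefore has norm at least $\|y\|$. First I would set $d := q - x$ and consider the univariate function $\phi(\lambda) = \|x + \lambda d\|^2 = \|x\|^2 + 2\lambda\,x^\top d + \lambda^2\|d\|^2$. Because $x + \lambda d \in \aff(S)$ for every $\lambda \in \R$, minimizing $\phi$ over $\lambda$ produces a genuine point of $\aff(S)$ whose squared norm is still bounded below by $\|y\|^2$.

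Carrying out the minimization, the minimizer is $\lambda^\star = -\,x^\top d/\|d\|^2$ (assuming $d \neq 0$; the case $x = q$ is trivial, since then $\Delta(x,q) = 0$ and the inequality reduces to $\|x\|^2 \geq \|y\|^2$), yielding $\phi(\lambda^\star) = \|x\|^2 - (x^\top d)^2/\|d\|^2$. Since $\phi(\lambda^\star) \geq \|y\|^2$ by minimality of $\|y\|$ over $\aff(S)$, I obtain $\|x\|^2 - \|y\|^2 \geq (x^\top d)^2/\|d\|^2$.

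The two remaining steps are pure bookkeeping. For the numerator, $x^\top d = x^\top q - \|x\|^2 = -\Delta(x,q)$, so $(x^\top d)^2 = \Delta(x,q)^2$. For the denominator, the triangle inequality gives $\|d\| = \|q - x\| \leq \|q\| + \|x\| \leq 2Q$, hence $\|d\|^2 \leq 4Q^2$. Combining these yields $\|x\|^2 - \|y\|^2 \geq \Delta(x,q)^2/(4Q^2)$, which is exactly \eqref{eq:geom}.

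I expect no serious obstacle here; the only point demanding care is that the optimal $\lambda^\star$ is allowed to range over all of $\R$, which is legitimate precisely because $\aff(S)$ is an \emph{affine} (not merely convex) hull, so the entire line---and not just the segment between $x$ and $q$---remains feasible. It is worth noting that the affine-minimizer property $v^\top y = \|y\|^2$ is not actually invoked; only the minimality of $\|y\|$ over $\aff(S)$ is used.
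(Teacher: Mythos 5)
Your proof is correct, but it takes a genuinely different route from the paper's. The paper first invokes the orthogonality (variational) characterization of the affine minimizer, $x^\top y = q^\top y = \|y\|^2$, which yields both the Pythagorean identity $\|x-y\|^2 = \|x\|^2 - \|y\|^2$ and the rewriting $\Delta(x,q) = (y-x)^\top (q-x)$; it then applies Cauchy--Schwarz and the triangle inequality $\|q-x\| \le 2Q$. You instead observe that the entire line $\{x + \lambda (q-x) : \lambda \in \R\}$ lies in $\aff(S)$, minimize the quadratic $\|x+\lambda(q-x)\|^2$ explicitly, and compare the minimum value $\|x\|^2 - \Delta(x,q)^2/\|q-x\|^2$ against $\|y\|^2$, using only the minimality of $\|y\|$ over $\aff(S)$ --- as you correctly note, the property $v^\top y = \|y\|^2$ is never invoked. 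Both arguments pass through the same intermediate bound $\|x\|^2 - \|y\|^2 \ge \Delta(x,q)^2/\|q-x\|^2$ and finish with the same triangle-inequality step, so they are two views of one geometric fact (projecting the origin onto a line versus Cauchy--Schwarz after orthogonality); what yours buys is self-containedness and slight extra generality, since it applies verbatim with $y$ replaced by any point whose norm lower-bounds all norms on $\aff(S)$, while the paper's version is shorter in context because the orthogonality property is stated in the preliminaries and reused elsewhere, e.g.\ in the proof of \Lem{main-decrease}. Your treatment of the degenerate case $q=x$ is also correct and is a detail the paper glosses over.
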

\begin{proof}
Since $y$ is the minimum norm point in $\aff(S)$, we have $x^\top y = q^\top y = ||y||^2$. In particular, $||x-y||^2 = ||x||^2 - ||y||^2$.
Therefore,
\begin{align}
\Delta(x,q) = \|x\|^2-x^Tq&=\|x\|^2- x^\top y + y^\top q-x^Tq=(y-x)^T(q-x)\leq  \|y-x\|\cdot \|q-x\|\nonumber\\&\leq  \|y-x\| (\|x\|+\|q\|)\leq 2Q\|y-x\|,\nonumber\end{align}
where the first inequality is Cauchy-Schwartz and the second is triangle inequality.
Lemma now follows by taking square of the above expression and by observing that $\|y-x\|^2=\|x\|^2-\|y\|^2$. 
\end{proof}
The above lemma takes case of major cycles with no minor cycles in them.
\begin{lemma}[Progress in Major Cycle with no Minor Cycles]\label{lem:major-progress}
Let $t$ be the index of a good major cycle with no minor cycles. Then $\err(x_t) - \err(x_{t+1}) \geq \Delta^2(x_t)/4Q^2$.
\end{lemma}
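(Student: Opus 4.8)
The plan is to reduce the claim to the geometric inequality already proved in \Lem{geom}, since a major cycle with no minor cycle is precisely the configuration that lemma was designed to handle. First I would pin down exactly what $x_{t+1}$ is. In a good major cycle with no minor cycle, the minor loop breaks on its very first affine-minimizer computation (all $\alpha_i \geq 0$), so the concluding update $x \leftarrow y$ sets $x_{t+1} = y$, where $y$ is the minimum-norm point of $\aff(S \cup \{q\})$, $q = \LO(x_t)$ is the vertex returned by the linear-optimization oracle, and $S$ is the corral carried into this cycle.

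Next I would verify that both $x_t$ and $q$ lie in $\aff(S\cup\{q\})$, so that \Lem{geom} applies with the set $S\cup\{q\}$ and its minimum-norm point $y$. The point $q$ lies there trivially. For $x_t$, I would invoke the invariant maintained at the end of every major cycle---the current iterate is the affine minimizer of its corral---so $x_t \in \aff(S) \subseteq \aff(S\cup\{q\})$. I would also record the norm bound needed by the lemma: $x_t \in \P$ is a convex combination of vertices and $q$ is a vertex of $\P$, so by convexity of the Euclidean norm both have norm at most $Q = \max_{p\in\P}\|p\|_2$, which is exactly the quantity playing the role of $Q$ in \Lem{geom}.

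Applying \Lem{geom} then gives $\|x_t\|^2 - \|x_{t+1}\|^2 = \|x_t\|^2 - \|y\|^2 \geq \Delta(x_t,q)^2/(4Q^2)$. The final step is two identifications. On the right, because $q = \LO(x_t)$ minimizes $x_t^\top p$ over $\P$, we have $\Delta(x_t,q) = \|x_t\|^2 - x_t^\top q = \|x_t\|^2 - \min_{p\in\P} x_t^\top p = \Delta(x_t)$. On the left, subtracting the common constant $\|x_*\|^2$ from both squared norms converts $\|x_t\|^2 - \|x_{t+1}\|^2$ into $\err(x_t) - \err(x_{t+1})$. Combining the two yields exactly $\err(x_t) - \err(x_{t+1}) \geq \Delta^2(x_t)/4Q^2$.

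There is no serious analytic obstacle here; the entire quantitative content lives in \Lem{geom}. The one point that genuinely needs care---and the step I would double-check most carefully---is the bookkeeping establishing $x_t \in \aff(S\cup\{q\})$, namely that the iterate entering a major cycle really is the affine minimizer of the corral it inherits, together with the observation that the oracle output $q$ is the point realizing $\Delta(x_t)$. Once those two facts are nailed down, the inequality is immediate.
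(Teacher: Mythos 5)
Your proposal is correct and takes essentially the same route as the paper's proof: apply \Lem{geom} to the set $S_t \cup \{q_t\}$ (using that $x_t$, as the affine minimizer of its corral, lies in $\aff(S_t)$), identify $\Delta(x_t,q_t) = \Delta(x_t)$ via the optimality of the oracle output, and subtract $||x_*||^2$ to convert norms into $\err$. The only detail the paper spells out that you elide is that since $t$ indexes \emph{good} major cycles, $x_{t+1}$ is formally the iterate at the next good cycle (possibly after intervening bad ones), but the monotone decrease of the norm makes bounding the drop down to $y$ sufficient, exactly as your argument does.
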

\begin{proof}
Let $S_t$ be the set $S$ at start of the $t$th good major cycle, and let $q_t$ be the point minimizing $x_t^\top q$.
Let $S = S_t \cup q_t$ and let $y$ be the minimum norm point in $\aff(S)$. Since there are no minor cycles, $y \in \conv(S)$.
 Abuse notation and let $x_{t+1} = y$ be the iterate at the call of the next major cycle (and not the next good major cycle). Since the norm monotonically decreases, it suffices to prove the lemma statement for this $x_{t+1}$.
 Now apply \Lem{geometric} with $x=x_t$ and $q=q_t$ and $S = S_t\cup q_t$.
 We have that $\err(x_t)-\err(x_{t+1}) = ||x_t||^2 -||y||^2 \geq \Delta(x_t,q_t)^2/4Q^2 = \Delta(x_t)^2/4Q^2$. 
 \end{proof}

Now we have to argue about major cycles with exactly one minor cycle. The next observation is a useful structural result. 
\begin{lemma}[New Vertex Survives a Minor Cycle.] \label{lem:new-vertex-survives}
Consider any (not necessarily good) major cycle.
Let $x_t, S_t,q_t$ be the parameters at the beginning of this cycle, and let 
$x_{t+1}, S_{t+1},q_{t+1}$ be the parameters at the beginning of the next major cycle.
Then, $q_t \in S_{t+1}$. 
\end{lemma}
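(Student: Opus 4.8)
The plan is to prove the claim by contradiction, using only two facts that are already in hand: first, that a major cycle never \emph{adds} points to the working set (after the single insertion of $q_t$ in step~2(c), every subsequent minor cycle only \emph{deletes} points with vanishing convex weight), and second, the strict monotonicity of $\|x\|$ that was established in the finite-termination discussion. Crucially, I want to avoid chasing the convex coefficient of $q_t$ through the minor cycles: a single (possibly bad) major cycle may contain several minor cycles, each recomputing an affine minimizer over a shrinking set, and the sign of $q_t$'s coefficient in those later affine minimizers is exactly the quantity that is hard to control. The monotonicity route sidesteps this entirely.

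First I would record the structural observation that $S_{t+1}\subseteq S_t\cup\{q_t\}$. The major cycle begins by setting $S = S_t\cup\{q_t\}$, and thereafter each minor cycle only removes elements, so the corral $S_{t+1}$ passed to the next major cycle is a subset of $S_t\cup\{q_t\}$. Consequently, if we assume for contradiction that $q_t\notin S_{t+1}$, then necessarily $S_{t+1}\subseteq S_t$, and hence $\aff(S_{t+1})\subseteq\aff(S_t)$.

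Next I would invoke the corral property noted earlier: at the start of every major cycle the current iterate is precisely the affine minimizer of its corral, so $x_t=\arg\min_{z\in\aff(S_t)}\|z\|$ and $x_{t+1}=\arg\min_{z\in\aff(S_{t+1})}\|z\|$. Since minimizing the norm over a smaller affine hull can only raise the optimum, the inclusion $\aff(S_{t+1})\subseteq\aff(S_t)$ forces $\|x_{t+1}\|\ge\|x_t\|$.

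This is exactly the contradiction I am after. Because $q_t=\LO(x_t)$ satisfies $x_t^\top q_t<\|x_t\|^2$ (the termination test failed, which is why the major cycle ran at all), the norm strictly decreases across the cycle, giving $\|x_{t+1}\|<\|x_t\|$; this is incompatible with $\|x_{t+1}\|\ge\|x_t\|$, so $q_t\in S_{t+1}$. The main obstacle — and the reason I prefer this argument to a weight-tracking one — is simply making the two supporting facts watertight: that $x_t$ and $x_{t+1}$ genuinely are the min-norm points of their respective affine hulls (the corral invariant), and that the strict norm decrease persists across an \emph{entire} major cycle rather than just one iteration. Both follow from results already established, so once they are cited the heart of the proof reduces to the monotone nesting $\aff(S_{t+1})\subseteq\aff(S_t)$.
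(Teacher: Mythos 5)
Your proof is correct and is essentially identical to the paper's own argument: both establish $S_{t+1}\subseteq S_t\cup\{q_t\}$, assume $q_t\notin S_{t+1}$ to get $S_{t+1}\subseteq S_t$, and then derive $\|x_{t+1}\|\geq\|x_t\|$ from the affine-minimizer (corral) property of $x_t$ and $x_{t+1}$, contradicting the strict norm decrease established in the finite-termination discussion. Your write-up merely spells out the two supporting invariants in more detail than the paper does.
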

\begin{proof}
Clearly $S_{t+1} \subseteq S_t \cup q_t$ since $q_t$ is added and then maybe minor cycles remove some points from $S$.
Suppose $q_t \notin S_{t+1}$. Well, then $S_{t+1} \subseteq S_t$. But $x_{t+1}$ is the affine minimizer of $S_{t+1}$ and $x_t$ is the affine minimizer of $S_t$.
Since $S_t$ is the larger set, we get $||x_t|| \leq ||x_{t+1}||$. This contradicts the strict decrease in the norm.
\end{proof}

\begin{lemma}[Progress in an iteration with exactly one minor cyvle]\label{lem:main-decrease}
Suppose the $t$th good major cycle has exactly one minor cycle. Then, $\err(x_t) - \err(x_{t+1}) \geq \Delta(x_t)^2/8Q^2$. 
\end{lemma}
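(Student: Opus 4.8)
\noindent
The plan is to dissect a single-minor-cycle major cycle into its two elementary moves and bound the norm drop of each. Write $x_t$, $S_t$, $q_t=\LO(x_t)$ for the state at the start of the cycle, and set $S'=S_t\cup\{q_t\}$ with affine minimizer $y_0$; since the cycle contains a minor cycle we have $y_0\notin\conv(S')$. The minor step moves the iterate to the boundary point $x'=(1-\mu)x_t+\mu y_0$ for some $\mu\in(0,1)$, deletes at least one point, and leaves a set $S''$ with $x'\in\conv(S'')$. That $\mu>0$ uses \Lem{new-vertex-survives}: the only coefficient equal to $0$ after adding $q_t$ is that of $q_t$ itself, and $q_t$ must survive. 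Because there is exactly one minor cycle, the affine minimizer $y_1$ of $S''$ already lies in $\conv(S'')$, and the cycle ends with $x_{t+1}:=y_1$. As in \Lem{major-progress}, the norm being monotone, it suffices to prove the bound for this $x_{t+1}$.

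I would then split $\|x_t\|^2-\|x_{t+1}\|^2=(\|x_t\|^2-\|x'\|^2)+(\|x'\|^2-\|x_{t+1}\|^2)$. For the first transition, the affine-minimizer property gives $x_t-y_0\perp y_0$ and $q_t-y_0\perp y_0$; writing $w:=\|x_t-y_0\|^2=\|x_t\|^2-\|y_0\|^2$ and $x'=y_0+(1-\mu)(x_t-y_0)$ yields the exact identity $\|x_t\|^2-\|x'\|^2=\mu(2-\mu)\,w$. For the second transition I apply \Lem{geom} to $S''$ with the pair $(x',q_t)$ — both lie in $\aff(S'')$, and $q_t\in S''$ by \Lem{new-vertex-survives} — to get $\|x'\|^2-\|x_{t+1}\|^2\ge \Delta(x',q_t)^2/(4Q^2)$. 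The same orthogonality relations let me compute $\Delta(x',q_t)=(1-\mu)(\Delta_0-\mu w)$, where $\Delta_0:=\Delta(x_t)=\Delta(x_t,q_t)$ since $q_t=\LO(x_t)$.

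These two facts reduce the lemma to the scalar inequality
\[
\Phi(\mu,w):=\mu(2-\mu)w+\frac{(1-\mu)^2(\Delta_0-\mu w)^2}{4Q^2}\ \ge\ \frac{\Delta_0^2}{8Q^2},
\]
where, from the intermediate estimate $\Delta_0\le 2Q\sqrt w$ in the proof of \Lem{geom}, we have $w\ge R:=\Delta_0^2/(4Q^2)$, and where $\Delta_0=\|x_t\|^2-x_t^\top q_t\le 2Q^2$. I would prove this in two moves. First, $\Phi$ is nondecreasing in $w$ on $[R,\infty)$: its $w$-derivative is $\mu\bigl[(2-\mu)-(1-\mu)^2(\Delta_0-\mu w)/(2Q^2)\bigr]$, and the subtracted term is at most $\Delta_0/(2Q^2)\le 1\le 2-\mu$. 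Hence it suffices to check $w=R$. There, writing $\beta:=\Delta_0/(4Q^2)\le\tfrac12$ so that $R=\Delta_0\beta$, a short computation gives $\Phi(\mu,R)=R\,h(\mu)$ with $h(\mu)=\mu(2-\mu)+(1-\mu)^2(1-\mu\beta)^2$. Using $\beta\le\tfrac12$ and the identity $\mu(2-\mu)=1-(1-\mu)^2$, one obtains $h(\mu)\ge 1-\tfrac14(1-\mu)^2\mu(4-\mu)\ge 1-\tfrac{4}{27}>\tfrac12$, since $(1-\mu)^2\mu\le\tfrac{4}{27}$ on $[0,1]$.

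The crux is this last scalar inequality, because the two transitions trade off against one another: for $\mu$ near $1$ the first term carries the whole drop (it is $\ge\tfrac34 R$ once $\mu\ge\tfrac12$), for $\mu$ near $0$ the second term does (it tends to $\Delta_0^2/(4Q^2)=R$), and the danger is the intermediate regime. The temptation to bound the two terms separately — e.g. $(1-\mu)^2\ge\tfrac14$ — fails precisely because it discards the $\mu\to0$ behaviour and leaves only $R/4$; collapsing to $w=R$ first and then handling the single variable $\mu$ is what keeps the estimate clean. I note finally that $\Delta_0\le 2Q^2$ (hence $\beta\le\tfrac12$) is used in an essential way, both in the monotonicity step and in the bound on $h$.
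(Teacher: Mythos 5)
Your proof is correct and follows essentially the same route as the paper's: the same decomposition of the cycle into the minor-step move and the subsequent affine minimization, the same identities $\|x_t\|^2-\|x'\|^2=\mu(2-\mu)w$ and $\Delta(x',q_t)=(1-\mu)(\Delta_0-\mu w)$ (the paper's \eqref{eq:10} and \eqref{eq:z} with $\theta=1-\mu$), and the same applications of \Lem{geom} and \Lem{new-vertex-survives}, arriving at exactly the paper's inequality \eqref{eq:thm}. The only divergence is the final verification of the scalar inequality: the paper splits into cases according to whether $(1-\theta^2)A>\Delta(x_t)^2/8Q^2$, while you reduce to $w=R$ by monotonicity in $w$ and bound a one-variable function, which works equally well and in fact gives the slightly stronger constant $\tfrac{23}{27}\cdot\tfrac{\Delta_0^2}{4Q^2}$.
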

\begin{proof}
Let $x_t, S_t, q_t$ be the parameters at the beginning of the $t$th good major cycle. Let $y$ be the affine minimizer of $S_t\cup q_t$. Since there is one minor cycle, $y\notin \conv(S_t\cup q_t)$.
Let $z = \theta x_t +(1-\theta)y$ be the intermediate $x$, that is,  point in the line segment $[x_t,y]$ which lies in $\conv(S_t\cup q_t)$. Let $S'$ be the set after the single minor cycle is run. 
Since there is just one minor cycle, we get $x_{t+1}$ (abusing notation once again since the next major cycle maynot be good) is the affine minimizer of $S'$.

Let $A \triangleq ||x_t||^2-||y||^2$. From \Lem{geom}, and using $q_t$ is the minimizer of $x_t^\top q$ over all $q$, 
we have: 
\begin{equation}
\label{eq:A}
A = ||x_t||^2-||y||^2 \geq \Delta^2(x_t)/4Q^2
\end{equation}
Recall, $z = \theta x_t + (1-\theta)y$ for some $\theta \in [0,1]$. Since $y$ is the min-norm point of $\aff(S_t\cup q_t)$, and $x_t\in S_t$, we get $||z||^2 = \theta^2||x_t||^2 + (1-\theta^2)||y||^2$.
this yields:
 \begin{equation}
 \label{eq:10}
 ||x_t||^2 - ||z||^2 = (1-\theta^2)\left(||x_t||^2 - ||y||^2\right) = (1-\theta^2)A
 \end{equation}
Further, recall that $S'$ is the set after the only minor cycle in the $t^{th}$ iteration is run and thus, from \Lem{new-vertex-survives}, $q_t \in S'$. $z\in \conv(S')$ by definition.
And since there is only one minor cycle, $x_{t+1}$ is the affine minimizer of $S'$. We can apply \Lem{geom} with $z,q_t$ and $x_{t+1}$, to get 
\begin{equation}
\label{eq:taking-care-of-z}
||z||^2 - ||x_{t+1}||^2 \geq \frac{\Delta^2(z,q_t)}{4Q^2}
\end{equation}
Now we lower bound $\Delta^2(z,q_t)$. By definition of $z$, we have:
$$z^\top q_t = \theta x_t^\top q_t + (1-\theta) y^\top q_t = \theta x_t^\top q_t + (1-\theta)||y||^2$$
where the last equality follows since $y^\top q_t = ||y||^2$ (since $q_t\in S_t\cup q_t$ and $y$ is affine minimizer of $S_t\cup q_t$). This gives
\begin{eqnarray}
\Delta(z,q_t) & = & ||z||^2 - z^\top q_t \notag \\
				     & = &  \left(\theta^2 ||x_t||^2 + (1-\theta^2)||y||^2\right) - \left(\theta x_t^\top q_t + (1-\theta)||y||^2\right) \notag \\
				     & = & \theta(||x_t||^2 - x_t^\top q_t) - \theta(1-\theta)\left(||x_t||^2 - ||y||^2\right) \notag \\
				     & = & \theta\left(\Delta(x_t)- (1-\theta)A\right) \label{eq:z}
\end{eqnarray}

From \eqref{eq:10},\eqref{eq:taking-care-of-z}, and \eqref{eq:z}, we get 
\begin{equation}
\label{eq:thm}
\err_t - \err_{t+1} \geq (1-\theta^2)A + \frac{\theta^2\left(\Delta(x_t) - (1-\theta)A\right)^2}{4Q^2}
\end{equation}
We need  to show that the RHS is at least $\Delta(x_t)^2/8Q^2$. 
Intuitively, if $\theta$ is small (close to $0$),  the first term implies this using \eqref{eq:A}, and if $\theta$ is large (close to $1$), then the second term implies this.
The following paragraph formalizes this intuition for any $\theta$.

Now, if $(1-\theta^2)A > \Delta(x_t)^2/8Q^2$, we are done. Therefore, we assume $(1-\theta^2)A \leq \Delta(x_t)^2/8Q^2$. 
In this case, using the fact that $\Delta(x_t) \leq ||x_t||^2 + ||x_t||||q_t|| \leq  2Q^2$, we get that 
$$(1-\theta)A \leq (1-\theta^2)A \leq \Delta(x_t)\cdot \frac{\Delta(x_t)}{8Q^2} \leq \Delta(x_t)/4$$
Substituting in \eqref{eq:thm}, and using \eqref{eq:A}, we get 
\begin{eqnarray}
\err_t - \err_{t+1} & \geq & \frac{(1- \theta^2)\Delta(x_t)^2}{4Q^2} + \frac{9\theta^2\Delta(x_t)^2}{64Q^2}  \geq \frac{\Delta(x_t)^2}{8Q^2}\label{eq:21} 
\end{eqnarray}
This completes the proof of the lemma.
\end{proof}
\Lem{major-progress} and \Lem{main-decrease} complete the proof of \Thm{oneminloop}. 

\subsection{A Robust version of  Fujishige's Theorem}\label{sec:analysis-b}
In this section we prove \Thm{robust-fuji} which we restate below. 
\robustfuji*

Before proving the theorem, note that setting $\eps=0$ gives Fujishige's theorem \Thm{fujishige}. 

\begin{proof}
We claim that the following inequality holds. Below, $[i] := \{1,\ldots,i\}$.
\begin{equation}
\label{eq:telescope}
\sum_{i=1}^{n-1} (x_{i+1} - x_{i})\cdot\left(f([i]) - x([i])\right) \leq \eps^2
\end{equation}
We prove this shortly. Let $S$ and $k$ be as defined in the theorem statement. Note that $\sum_{i\in S: x_i \geq 0} x_i \leq n\eps$, since 
(C2) doesn't hold for any index $i<k$ with $x_i \geq 0$. Furthermore, since $x_{k+1} - x_k \geq \eps/n$, we get using \eqref{eq:telescope},  $f(S) - x(S) \leq n\eps$.
Therefore, $f(S) \leq \sum_{i\in S:x_i < 0} x_i + 2n\eps$ which implies the theorem due to \Thm{edmonds}.

Now we prove \eqref{eq:telescope}.
Let $z\in \cB_f$ be the point which minimizes $z^\top x$. By the Greedy algorithm described in Section \ref{sec:prelim-a}, 
we know that $z_i = f([i]) - f([i-1])$.  
Next, we  write $x$ in a different basis as follows: $x =  \sum_{i=1}^{n-1} (x_i - x_{i+1})\1_{[i]} + x_n\1_{[n]}$. Here $\1_{[i]}$ is used as the shorthand for the vector
which has $1$'s in the first $i$ coordinates and $0$s everywhere else.
 Taking dot product with $(x-z)$, we get
\begin{equation}\label{eq:007}
||x||^2 - x^\top z = (x-z)^\top x = \sum_{i=1}^{n-1}(x_i - x_{i+1})\left(x^\top \1_{[i]} - z^\top\1_{[i]}\right) + x_n\left(x^\top \1_{[n]} - z^\top\1_{[n]}\right)
\end{equation}
Since $z_i = f([i])-f([i-1])$, we get 
$x^\top \1_{[i]} - z^\top\1_{[i]}$ is $x([i]) - f([i])$. 
Therefore the RHS of \eqref{eq:007} is the LHS of \eqref{eq:telescope}. 
 The LHS of \eqref{eq:007}, by the assumption of the theorem, is at most $\eps^2$ implying \eqref{eq:telescope}.
\end{proof}

\section{Discussion and Conclusions}\label{sec:conc}
\begin{figure}[ht]
  \centering
\begin{tabular}{ccc}
\hspace*{-10pt}\includegraphics[width=.33\textwidth]{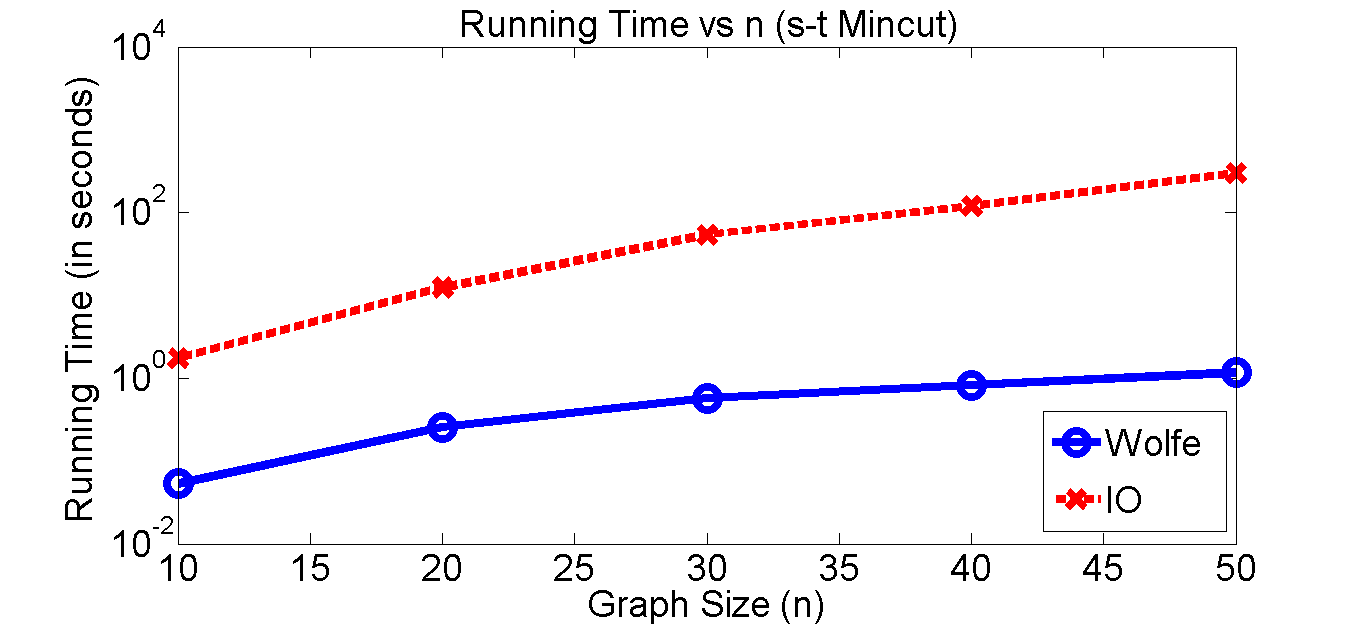}&
\hspace*{-10pt}\includegraphics[width=.33\textwidth]{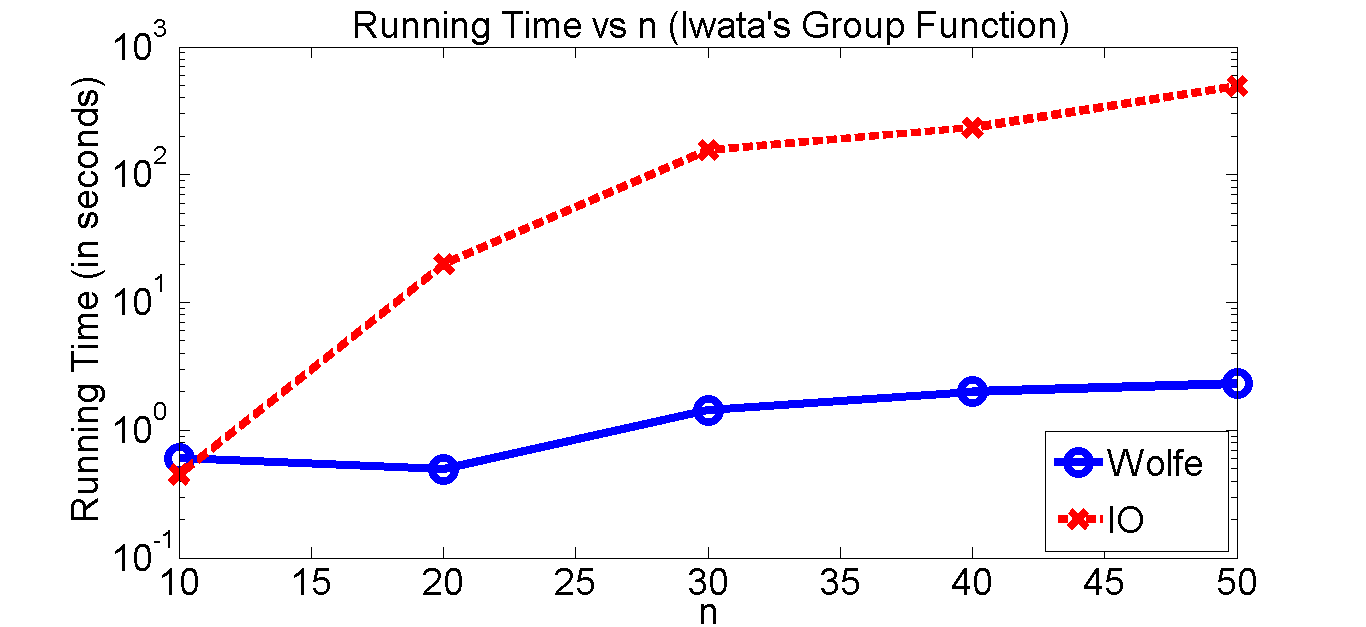}&
\hspace*{-10pt}\includegraphics[width=.33\textwidth]{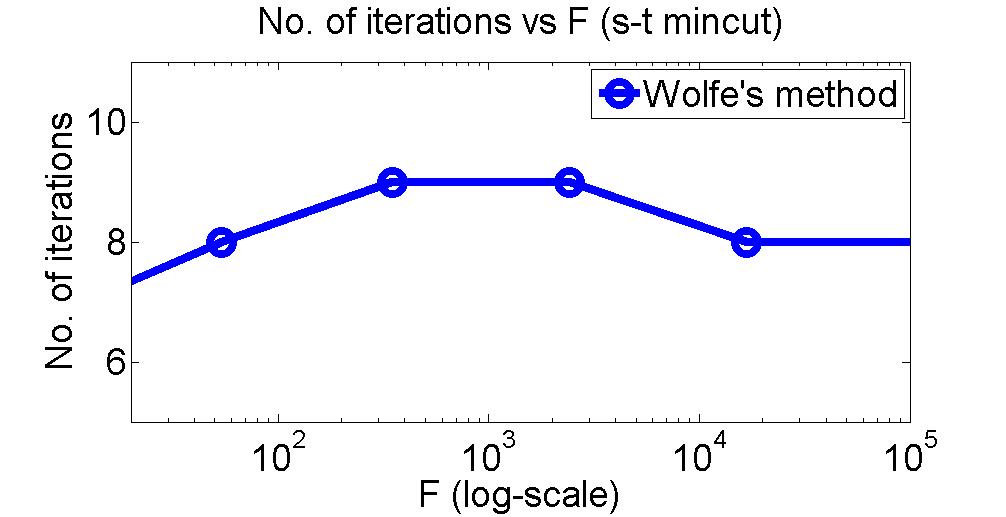}\\
\hspace*{-10pt}(a)&\hspace*{-10pt}(b)&\hspace*{-10pt}(c)
\end{tabular}
  \caption{Running time comparision of Iwata-Orlin's (IO) method \cite{IO09} vs Wolfe's method. (a): s-t mincut function, (b) Iwata's $3$ groups function \cite{JLB11}. (c): Total number of iterations required by Wolfe's method for solving s-t mincut with increasing $F$}
  \label{fig:exps}
\end{figure}

We have shown that the Fujishige-Wolfe algorithm solves SFM in $O((n^5\EO + n^7)F^2)$ time, where $F$ is the maximum change in the value of the function on addition or deletion of an element. Although this is the first pseudopolynomial time analysis of the algorithm, we believe there is room for improvement
and hope our work triggers more interest. 

Note that our anlaysis of the Fujishige-Wolfe algorithm is weaker than the best known method in terms of time complexity (IO method by \cite{IO09}) on two counts: a) dependence on $n$, b) dependence on $F$.
In contrast, we found this algorithm significantly outperforming the IO algorithm empirically -- we show two plots here.
In Figure~\ref{fig:exps} (a), we run both on Erdos-Renyi graphs with $p=0.8$ and randomly chosen $s,t$ nodes. In Figure~\ref{fig:exps} (b), we run both on the Iwata group functions~\cite{JLB11} with $3$ groups.  
Perhaps more interestingly, in Figure~\ref{fig:exps} (c), we ran the Fujishige-Wolfe algorithm on the simple path graph where $s,t$ were the end points, and changed the capacities
on the edges of the graph which changed the parameter $F$. As can be seen, the number of iterations of the algorithm remains constant even for exponentially increasing $F$.

\bibliography{refs}
\bibliographystyle{plain}

\end{document}